\begin{document}

\newtheorem{theorem}{Theorem}
\newtheorem{corollary}[theorem]{Corollary}
\newtheorem{lemma}[theorem]{Lemma}
\newtheorem{proposition}[theorem]{Proposition}
\newtheorem{conjecture}[theorem]{Conjecture}
\newtheorem{defin}[theorem]{Definition}
\newenvironment{definition}{\begin{defin}\normalfont\quad}{\end{defin}}
\newtheorem{examp}[theorem]{Example}
\newenvironment{example}{\begin{examp}\normalfont\quad}{\end{examp}}
\newtheorem{rema}[theorem]{Remark}
\newenvironment{remark}{\begin{rema}\normalfont\quad}{\end{rema}}

\title{Inverse Star, Borders, and Palstars}

\author{Narad Rampersad\\
Department of Mathematics\\
University of Li\`ege\\
Grande Traverse, 12 (Bat. B37)\\
4000 Li\`ege\\
Belgium\\
{\tt narad.rampersad@gmail.com} \\
\and
Jeffrey Shallit\\
School of Computer Science\\
University of Waterloo\\
Waterloo, ON  N2L 3G1\\
Canada\\
{\tt shallit@cs.uwaterloo.ca}
\and
Ming-wei Wang\\
Microsoft Corporation\\
Redmond, WA  \\
USA\\
{\tt m2wang@gmail.com} \\
}

\maketitle

\begin{abstract}
A language $L$ is closed if $L = L^*$.
We consider an operation on closed languages, $L^{-*}$,
that is an inverse to Kleene closure.  It is known that if
$L$ is closed and
regular, then $L^{-*}$ is also regular.  We show that
the analogous result fails to hold
for the context-free languages.  Along the way we find a new relationship
between the unbordered words and the prime palstars of Knuth, Morris, and
Pratt.  We use this relationship to enumerate the prime palstars, and we
prove that neither the language of all unbordered words nor the language of
all prime palstars is context-free.
\end{abstract}

\section{Inverse star}

Let $L$ be a language such that $L = L^*$.  Then,
following
\cite{Brzozowski&Grant&Shallit:2009}, we say that $L$ is {\it closed}.
Brzozowski \cite{Brzozowski:1967} studied the
the ``smallest'' language $M$ such that $L = M^*$.  

\begin{definition}
For closed languages $L$, define
$$L^{-*} = \bigcap_{S^* = L} S.$$
\end{definition}

Brzozowski proved 

\begin{theorem}
If $L$ is closed then $(L^{-*})^* = L$.  Furthermore
$L^{-*} = L - L^2$.  If $L$ is regular and closed, then so is
$L^{-*}$.
\end{theorem}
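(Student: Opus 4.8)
The plan is to prove the middle identity $L^{-*}=L-L^{2}$ first, since both $(L^{-*})^{*}=L$ and the regularity claim then follow almost immediately. Call a word $w\in L$ \emph{prime} if it admits no factorization $w=uv$ into two nonempty words of $L$; since $L=L^{*}$ gives $L^{2}\subseteq L$, the set $P$ of prime words of $L$ is exactly $L-L^{2}$. It therefore suffices to show $L^{-*}=P$, which I will do by two inclusions; the crucial point is that a prime word is ``rigid'', meaning it must occur verbatim in \emph{every} generating set of $L$.

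For $P\subseteq L^{-*}$, fix a prime word $w$ and an arbitrary language $S$ with $S^{*}=L$; I claim $w\in S$. Since $w\in L=S^{*}$, write $w=s_{1}s_{2}\cdots s_{k}$ with each $s_{i}\in S$ nonempty and $k\geq 1$. Each $s_{i}$ lies in $S\subseteq S^{*}=L$, and $s_{2}\cdots s_{k}\in L^{*}=L$; hence if $k\geq 2$ then $w=s_{1}\cdot(s_{2}\cdots s_{k})$ is a factorization of $w$ into two nonempty words of $L$, contradicting primality. So $k=1$ and $w=s_{1}\in S$. As $S$ was arbitrary, $w\in\bigcap_{S^{*}=L}S=L^{-*}$.

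For the reverse inclusion $L^{-*}\subseteq P$ it suffices to verify that $P$ is \emph{itself} a generating set, i.e.\ that $P^{*}=L$: then $P$ is one of the languages in the intersection defining $L^{-*}$, which forces $L^{-*}\subseteq P$. Here $P\subseteq L$ gives $P^{*}\subseteq L^{*}=L$, while the reverse containment $L\subseteq P^{*}$ follows by induction on $|w|$ for $w\in L$ --- if $w$ is prime then $w\in P\subseteq P^{*}$, and if not then $w=uv$ with $u,v$ nonempty words of $L$, necessarily both shorter than $w$, so $u,v\in P^{*}$ by the induction hypothesis and $w=uv\in P^{*}$. This yields $P^{*}=L$, completing the proof of $L^{-*}=L-L^{2}$ and simultaneously giving $(L^{-*})^{*}=P^{*}=L$. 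Finally, if $L$ is regular then $L-L^{2}=L-L\cdot L$ is regular, since the regular languages are closed under concatenation and set difference, so $L^{-*}$ is regular. The step carrying the real weight is the rigidity claim used for $P\subseteq L^{-*}$ --- that an arbitrary re-factorization of a prime word through another generating set must be trivial; the remaining points (the length induction that the primes regenerate $L$, closure of regular languages under $\cdot$ and $\setminus$, and bookkeeping around the empty word in the degenerate case $L=\{\varepsilon\}$) are routine.
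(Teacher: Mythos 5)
The paper does not actually prove this theorem---it is stated as a result of Brzozowski \cite{Brzozowski:1967}---so there is no in-paper argument to compare yours against. The core of your argument is the standard one and is sound: the ``rigidity'' claim (any $S$-factorization of a prime word, after discarding empty factors, would otherwise exhibit it as a product of two nonempty words of $S^*=L$) and the length induction showing that the primes generate $L$ are exactly what is needed, and the regularity claim does reduce to closure of the regular languages under concatenation and set difference.

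One point needs more care than your closing aside about ``the degenerate case $L=\{\varepsilon\}$.'' Every closed language contains $\varepsilon$ (because $L=L^*$), hence $L=\varepsilon\cdot L\subseteq L^2$, so $L-L^2=\emptyset$ for \emph{every} closed $L$, not just for $L=\{\varepsilon\}$. Consequently your identification ``$P=L-L^2$'' is literally false: the identity only holds if $L-L^2$ is read as $(L-\{\varepsilon\})-(L-\{\varepsilon\})^2$, i.e., the nonempty words of $L$ that are not products of two nonempty words of $L$. (This is clearly the intended reading---the paper uses the same implicit convention when defining prime palstars---but a proof has to say so.) Relatedly, your $P$ as defined contains $\varepsilon$ vacuously, whereas $\varepsilon\notin L^{-*}$: indeed $S^*=L$ implies $(S-\{\varepsilon\})^*=L$, so the defining intersection already excludes $\varepsilon$. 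Thus what your two inclusions actually establish is $L^{-*}=P-\{\varepsilon\}$ (note also that your rigidity argument needs $w\neq\varepsilon$ to guarantee $k\geq 1$). With that convention made explicit, everything else you wrote goes through, including $(L^{-*})^*=(P-\{\varepsilon\})^*=P^*=L$ and the regularity of $(L-\{\varepsilon\})-(L-\{\varepsilon\})\cdot(L-\{\varepsilon\})$.
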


In this note we show that the class of context-free languages is not closed
under the operation $-*$.  First, though, we take a digression
to discuss products of palindromes.

\section{Palstars, prime palstars, and unbordered words}
\label{palstar}

In this section we find a new connection between the prime palstars
(as introduced in Knuth, Morris, and Pratt \cite{Knuth&Morris&Pratt:1977})
and the unbordered words.

We start with some definitions.  By $w^R$ we mean the reverse of the
word $w$.  A {\it palindrome} is a word
$w$ such that $w = w^R$.  In this paper we will only be concerned
with the nonempty palindromes of even length:

$${\tt PAL} = \lbrace x x^R \ : \ x \in \Sigma^+ \rbrace .$$

A {\it palstar} is an element of the language ${\tt PALSTAR} := {\tt PAL}^*$.

A word $x$ is a {\it prime palstar} if it is a palstar and cannot
be written as the product of two palstars.  Evidently a prime palstar
must itself be a palindrome.  The first few prime
palstars over $\lbrace 0,1 \rbrace$
are $00, 0110, 010010, 011110, 01000010, 01011010,
01111110, $ and their complements, obtained by mapping $0$ to $1$
and vice versa.  The language of all prime palstars is denoted
{\tt PRIMEPALSTAR}.

\begin{theorem}[Knuth-Morris-Pratt \cite{Knuth&Morris&Pratt:1977}]
Every palstar has a unique factorization into prime
palstars.
\end{theorem}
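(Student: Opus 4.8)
The plan is to reduce the theorem to a single combinatorial lemma about palindromes and to feed that lemma into two short inductions on length. The \emph{core lemma} I would prove first is: if $p$ and $q$ are nonempty even-length palindromes with $p$ a prefix of $q$, then $p^{-1}q$ is a palstar. Writing $q=ps$ and taking reverses, the hypotheses $p=p^{R}$ and $q=q^{R}$ yield $ps=s^{R}p$. If $|s|\ge|p|$, comparing the first $|p|$ symbols shows $p$ is a prefix of $s^{R}$, say $s^{R}=p\beta$; substituting back and cancelling $p$ from both ends forces $\beta=\beta^{R}$, so $s=\beta p$ is a product of at most two even palindromes. If $|s|<|p|$, then $p$ is a border of $q$ of length exceeding $|q|/2$, so $q$ has period $d:=|s|<|q|/2$ and the length-$d$ prefix $\rho$ of $q$ equals $s^{R}$; if $d\mid|q|$ then $q=\rho^{k}$ and the length-$d$ suffix $s$ of $q$ is $\rho=s^{R}$, a palindrome, while if $d\nmid|q|$ one writes $q=\rho^{k}\rho_{0}$ with $\rho_{0}$ the length-$r_{0}$ prefix of $\rho$, $r_{0}:=|q|\bmod d\in(0,d)$, and checks that $s$, the length-$d$ suffix of $q$, is the conjugate $BA$ where $\rho=AB$ with $|A|=r_{0}$; the identity $\rho=s^{R}$ then reads $AB=A^{R}B^{R}$, forcing $A$ and $B$ to be palindromes, necessarily even since $|q|$ and $d$ are even, so $s=BA$ is a palstar. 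In every case $s=p^{-1}q$ is a palstar. I expect this lemma — specifically the periodicity/conjugacy bookkeeping in the case $|s|<|p|$ — to be the only real obstacle; the rest is routine.

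\emph{Left cancellation.} Using the core lemma I would show, by strong induction on $|u|$, that ${\tt PALSTAR}$ is left unitary: if $u\in{\tt PALSTAR}$ is a prefix of $w\in{\tt PALSTAR}$, then $u^{-1}w\in{\tt PALSTAR}$. Fix factorizations $u=r_{1}\cdots r_{j}$ and $w=q_{1}\cdots q_{k}$ into members of ${\tt PAL}$; if $u\neq\varepsilon$ then $r_{1}$ and $q_{1}$ are comparable prefixes of $w$, and according to whether $|r_{1}|$ equals, exceeds, or is less than $|q_{1}|$, the core lemma lets one rewrite $r_{1}$ as $q_{1}$ times a palstar, or $q_{1}$ as $r_{1}$ times a palstar, or observe $r_{1}=q_{1}$; in each case $u^{-1}w=(u')^{-1}w'$ for a palstar $w'$ and a strictly shorter palstar prefix $u'$ of it, and the induction closes.

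\emph{Existence and uniqueness.} Existence of a prime-palstar factorization follows by induction on $|w|$: given a nonempty palstar $w$, let $p$ be its shortest nonempty palstar prefix; then $p$ is a prime palstar, since any factorization of $p$ into two nonempty palstars would produce a still shorter palstar prefix of $w$, and by left unitarity $w=pw'$ with $w'$ a shorter palstar, to which induction applies. For uniqueness, suppose $w=p_{1}\cdots p_{m}=p'_{1}\cdots p'_{n}$ with all factors prime palstars and, by symmetry, $|p_{1}|\le|p'_{1}|$; since prime palstars are even-length palindromes, the core lemma applied to the prefix $p_{1}$ of $p'_{1}$ gives $p'_{1}=p_{1}x$ with $x$ a palstar, whence primality of $p'_{1}$ forces $x=\varepsilon$ and $p_{1}=p'_{1}$; cancelling and inducting on $|w|$ finishes. (Alternatively, once left unitarity is known one may invoke the classical fact that a left-unitary submonoid of a free monoid is free: then ${\tt PALSTAR}$ has a unique minimal generating set, which is precisely the set of prime palstars.)
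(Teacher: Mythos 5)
Your proof is correct, and it is more than the paper itself supplies: the paper only cites Knuth--Morris--Pratt for this theorem and records, without proof, the one lemma the citation rests on (Lemma~\ref{pal}, that no prime palstar is a proper prefix of another prime palstar). Your ``core lemma'' is a strengthening of that statement: rather than merely forbidding the prefix relation between \emph{prime} palstars, you show that for any two nonempty even palindromes $p$ and $q$ with $p$ a prefix of $q$, the quotient $p^{-1}q$ is itself a palstar. The paper's Lemma~\ref{pal} falls out immediately (a prime palstar with a proper prime-palstar prefix would factor into two nonempty palstars), and so does uniqueness, since two competing leading prime factors of unequal length would again violate primality. The case analysis in your core lemma checks out: the identity $ps=s^Rp$ handles $|s|\ge|p|$ by direct cancellation, and the periodicity/conjugacy argument for $|s|<|p|$ correctly identifies the length-$d$ suffix of $q=\rho^k\rho_0$ as the conjugate $BA$ and extracts palindromicity of $A$ and $B$ from $AB=A^RB^R$ by a length comparison; the parity bookkeeping ($|q|$, $d$, and $r_0$ all even) is what guarantees the factors lie in {\tt PAL}. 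The left-unitarity induction and the existence argument via the shortest nonempty palstar prefix are also sound. What your route buys is a self-contained proof and, as you note, the structural fact that {\tt PALSTAR} is a free (unitary) submonoid with the prime palstars as its base --- which is really what ``unique factorization'' means here. The only quibbles are cosmetic: you could state explicitly that $\beta$ in the first case has even length (it does, as a difference of even lengths), and that a prime palstar is an even palindrome (the paper asserts this as evident, and it is, since a palstar with two or more {\tt PAL} factors splits into two nonempty palstars).
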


The proof of this theorem depends on the following lemma:

\begin{lemma}[Knuth-Morris-Pratt \cite{Knuth&Morris&Pratt:1977}]
No prime palstar is a proper prefix of another prime palstar.
\label{pal}
\end{lemma}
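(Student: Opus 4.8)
The plan is to argue by contradiction. Suppose $P$ is a proper prefix of $Q$, where both are prime palstars, and write $Q = PW$ with $W \ne \epsilon$. Since $P$ and $Q$ are prime palstars they are palindromes, and since they are palstars their lengths are even, so $|W| = |Q| - |P|$ is even as well. Taking reverses in $Q = PW$ and using $P = P^R$, $Q = Q^R$ gives the key word equation
$$ PW = W^R P . $$
I will also use the elementary fact that every nonempty palindrome of even length lies in ${\tt PAL}$, and is therefore a palstar. Hence, to reach a contradiction it suffices to exhibit $Q$ as a product of two or more nonempty even-length palindromes: this writes $Q$ as a product of two or more nonempty palstars, contradicting the primality of $Q$.

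I would then split into two cases according to whether $|W| \ge |P|$ or $|W| < |P|$. If $|W| \ge |P|$, then $P$ is a prefix of $W^R$ (being a prefix of $PW = W^R P$), say $W^R = PV$ with $|V| = |W| - |P| \ge 0$ even. Substituting $W = V^R P$ into $PW = W^R P$ and cancelling a copy of $P$ from each end gives $V = V^R$, so $V$ is an even-length palindrome (or empty) and $Q = PW = PVP$. When $V = \epsilon$ this is $Q = PP$, a product of two palstars; when $V \ne \epsilon$ it is a product of three palstars. Either way we contradict primality.

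The case $|W| < |P|$ is the heart of the argument, and I would handle it by a Euclidean-style descent on the equation $XW = W^R X$, starting from $X = P$. As long as $|X| > |W|$, the equation forces $W^R$ to be a prefix of $X$, say $X = W^R X'$; cancelling $W^R$ gives $X'W = W^R X'$, and combining this with the fact that $X$ is a palindrome shows $X'$ is a palindrome as well. Iterating this, I peel copies of $W^R$ off the front of $P$ --- the lengths decreasing by $|W|$ each time, and all even --- until the remaining factor $X$ satisfies $|X| = |W|$ or $0 < |X| < |W|$. In the first case, $XW = W^R X$ with $|X| = |W|$ forces $X = W^R$, so $W$ is a palindrome, and unwinding the descent expresses $Q$ as a power $W^m$ with $m \ge 2$. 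In the second case, $X$ is a prefix of $W^R$, say $W^R = XY$; cancelling $X$ in the equation gives $W = YX$, and comparing with $W^R = XY$ forces $Y = Y^R$, so that unwinding the descent expresses $Q$ as $(XY)^{k+1}X$ for some $k \ge 1$, an alternating product of $2k+3 \ge 5$ nonempty even-length palindromes. In every case $Q$ is a product of at least two nonempty palstars, which is the contradiction we wanted.

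The step I expect to be the main obstacle is the bookkeeping in this last case: one must check carefully that the palindrome property of $X$ really does propagate through each step of the descent, keep track of the exponents so that the final factorization of $Q$ comes out correctly, and treat the degenerate terminal case $0 < |X| < |W|$ separately rather than folding it into the generic step. A related point worth flagging is that it is natural to try to run the whole argument as an induction on $|Q|$, invoking the lemma recursively, but the words produced by the descent need not be prime palstars, so the recursion has to be carried out on the word equation $XW = W^R X$ itself rather than on prime palstars.
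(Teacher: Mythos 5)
Your argument is correct, and it is worth noting that the paper itself gives no proof of this lemma at all --- it is simply quoted from Knuth, Morris, and Pratt --- so your write-up is necessarily a different (and self-contained) route. The skeleton is sound: a prime palstar is an even-length palindrome, every nonempty even-length palindrome lies in ${\tt PAL}$, so it suffices to factor $Q$ into at least two nonempty even-length palindromes; the equation $PW = W^R P$ then does all the work. Your Case 1 ($|W| \ge |P|$, giving $Q = PVP$ with $V = V^R$) and the descent in Case 2 both check out, including the propagation of palindromicity ($X = W^R X'$ with $X = X^R$ and $X'W = W^R X'$ forces $(X')^R W = X'W$, hence $X' = (X')^R$) and the length bookkeeping (the terminal $X$ satisfies $0 < |X| \le |W|$, all lengths stay even, and $k \ge 1$). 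The one stylistic remark I would make is that your Euclidean descent is a hands-on re-derivation of the classical conjugacy lemma of Lyndon and Sch\"utzenberger: $W^R P = PW$ with $W$ nonempty forces $W^R = XY$, $W = YX$, and $P = (XY)^k X$ for some $k \ge 0$; the palindromicity of $P$ and the length equality $|X| = |X^R|$ then immediately give $X = X^R$ and $Y = Y^R$, and $Q = (XY)^{k+1}X$ is the desired factorization in all cases at once (with the degenerate subcases corresponding to $X$ or $Y$ empty). Invoking that lemma would collapse your two cases and three terminal subcases into a few lines, but nothing in your longer version is wrong.
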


\begin{corollary}
If $w$ is a palindrome of even length, then its factorization
into prime palstars must be of the form
$w = x_1 x_2 \cdots x_n$, where $x_i = x_{n+1-i}$ for 
$1 \leq i \leq n$.
\label{palin}
\end{corollary}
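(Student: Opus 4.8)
The plan is to deduce the statement directly from the uniqueness of factorization into prime palstars guaranteed by the Knuth--Morris--Pratt theorem. The idea is that reversing a prime-palstar factorization of $w$ produces another prime-palstar factorization, and when $w$ is a palindrome these two factorizations are factorizations of the same word, so uniqueness forces them to coincide term by term.

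First I would record two easy observations. (i) An even-length palindrome is itself a palstar: if $w = w^R$ with $|w| = 2k$, then writing $w = uv$ with $|u| = |v| = k$ forces $v = u^R$, so $w = uu^R \in {\tt PAL} \subseteq {\tt PALSTAR}$; hence the factorization $w = x_1 x_2 \cdots x_n$ into prime palstars exists and is unique. (ii) A prime palstar is a palindrome (as already noted in the text), so $x_i^R = x_i$ for each $i$. Combining these,
$$ w^R = (x_1 x_2 \cdots x_n)^R = x_n^R x_{n-1}^R \cdots x_1^R = x_n x_{n-1} \cdots x_1, $$
so $x_n x_{n-1} \cdots x_1$ is again a product of prime palstars.

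Now since $w$ is a palindrome we have $w = w^R$, i.e.\ $x_1 x_2 \cdots x_n = x_n x_{n-1} \cdots x_1$, and both sides are factorizations of $w$ into prime palstars. Invoking uniqueness (equivalently, Lemma~\ref{pal}) yields $n = n$ and $x_i = x_{n+1-i}$ for $1 \le i \le n$, which is exactly the claim.

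I do not expect any genuine obstacle: the argument is a one-line application of unique factorization. The only points that need a moment's care are the two preliminary observations above — that an even-length palindrome already belongs to ${\tt PALSTAR}$, so the factorization theorem applies, and that the reversal of a prime-palstar factorization is once more a prime-palstar factorization, which is what makes uniqueness applicable.
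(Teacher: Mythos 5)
Your proof is correct, but it takes a genuinely different route from the paper's. The paper argues directly from Lemma~\ref{pal}: since the palindrome $w$ ends with the palindrome $x_n$, it must also begin with $x_n^R = x_n$, so $x_1$ and $x_n$ are each prefixes of $w$ and hence one is a prefix of the other; Lemma~\ref{pal} then forces $x_1 = x_n$, and one peels these off and repeats on the shorter palindrome ${x_1}^{-1} w x_1^{-1}$. You instead reverse the whole factorization at once: because each prime palstar is itself a palindrome, $x_n x_{n-1} \cdots x_1$ is a second prime-palstar factorization of $w^R = w$, and the Knuth--Morris--Pratt unique-factorization theorem forces it to coincide term by term with the original. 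Your version is shorter and exhibits the corollary as a purely formal symmetry consequence of unique factorization plus the palindromicity of the primes (and your preliminary observation that an even-length palindrome already lies in ${\tt PAL}$, so the factorization exists, is a worthwhile point the paper leaves implicit). What the paper's version buys is that it uses only the prefix lemma rather than the full uniqueness theorem, mirroring the peeling mechanism by which uniqueness itself is established. One small quibble: your parenthetical ``equivalently, Lemma~\ref{pal}'' overstates matters --- uniqueness is \emph{derived from} that lemma, but your one-step argument genuinely invokes uniqueness, not merely the lemma.
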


\begin{proof}
Suppose $w = x_1 \cdots x_n$ is the factorization into prime palstars 
$x_i$.  If $n = 1$ we are done.  
Otherwise, since $w$ ends with $x_n$, it must begin with $x_n^R = x_n$.  
Hence either $x_1$ is a prefix of $x_n$, or vice versa.
By Lemma~\ref{pal} we must have $x_1 = x_n$.  Using the same argument
on the shorter palindrome ${x_1}^{-1} w x_1^{-1}$, we derive the
remaining equalities.
\end{proof}

We now turn to borders.  A word is said to be {\it bordered} if
it has some nonempty prefix that is also a suffix.  Otherwise,
it is {\it unbordered}.  Unbordered words are also called
{\it bifix-free} in the literature \cite{Nielsen:1973}.  

Equivalently,
a word $w$ is bordered if it can be written in the form
$xyx$ for some nonempty word $x$.  For example, 
{\tt entanglement} begins and ends with the string
{\tt ent}.  

Given two words of the
same length $x = a_1 a_2 \cdots a_n$ and $y = b_1 b_2 \cdots b_n$,
their {\it perfect shuffle} $x \sha y$ is defined by
$x \sha y = a_1 b_1 \cdots a_n b_n$.

\begin{theorem}
A word $w$ is a prime palstar if and only if there exists an
unbordered word $z$ such that $w = z \sha z^R$.
\label{palstar-thm}
\end{theorem}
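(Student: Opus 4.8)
The plan is to reduce the theorem to a purely combinatorial statement about palstars, using the fact that the operation $z \mapsto z \sha z^R$ turns borders of $z$ into palindromic prefixes of $z \sha z^R$. The first ingredient I would establish is the elementary observation that, for words $x,y$ of the same length, $x \sha y$ is a palindrome if and only if $y = x^R$: writing $(x \sha y)^R$ out shows it equals $y^R \sha x^R$, and since a perfect shuffle both determines and is determined by its two interleaved tracks, $x \sha y = (x \sha y)^R$ forces $x = y^R$. Two immediate consequences: (i) for any word $z$, the word $w := z \sha z^R$ is a palindrome of even length, hence (cut it in half) lies in ${\tt PAL}$; and (ii) conversely, every even-length palindrome $w$ arises as $z \sha z^R$, with $z$ the subword of $w$ occupying the odd positions.

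The key further consequence is the following \emph{correspondence}: for $1 \le \ell \le |z| - 1$, the word $z$ has a border of length $\ell$ if and only if $z \sha z^R$ has a palindromic prefix of length $2\ell$. This is because the length-$2\ell$ prefix of $z \sha z^R$ is exactly $u \sha v^R$, where $u$ is the length-$\ell$ prefix and $v$ the length-$\ell$ suffix of $z$; by the observation above this prefix is a palindrome precisely when $v^R = u^R$, i.e.\ when $u = v$, i.e.\ when $z$ has a border of length $\ell$.

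Next I would isolate the combinatorial statement: \emph{an even-length palindrome $w$ is a prime palstar if and only if $w$ has no nonempty palindromic proper prefix of even length.} For the forward direction, if $w$ is not a prime palstar then, by the Knuth--Morris--Pratt unique factorization theorem, its factorization into prime palstars has at least two factors, and the first of these is a prime palstar --- hence a palindrome of even length --- that is a proper prefix of $w$, a prefix of the forbidden kind. For the converse, if $q$ is a nonempty even-length palindromic proper prefix of $w$, then $q \in {\tt PAL} \subseteq {\tt PALSTAR}$; comparing the prime-palstar factorizations of $q$ and of $w$ and applying Lemma~\ref{pal} (no prime palstar is a proper prefix of another) repeatedly, the prime factors of $q$ must form an initial block of the prime factors of $w$, so $q^{-1}w$ is itself a palstar and $w = q \cdot (q^{-1}w)$ realizes $w$ as a product of two nonempty palstars; hence $w$ is not prime.

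The theorem now follows. Given an unbordered word $z$, the word $w = z \sha z^R$ is an even-length palindrome by (i) and, by the correspondence, has no nonempty even-length palindromic proper prefix, so it is a prime palstar. Conversely, a prime palstar $w$ is in particular an even-length palindrome, so by (ii) we may write $w = z \sha z^R$ with $z$ its odd-position subword; were $z$ bordered, the correspondence would hand us a nonempty even-length palindromic proper prefix of $w$, contradicting primality, so $z$ is unbordered. I expect the two most delicate points to be the index bookkeeping behind the correspondence (identifying the length-$2\ell$ prefix of $z \sha z^R$ with $u \sha v^R$) and, in the combinatorial statement, the claim that a palindromic even-length prefix of a palstar ``splits off'' as a palstar --- this is exactly where Lemma~\ref{pal} is indispensable, and is the step where a careless argument is most likely to hide a gap. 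The degenerate cases (the empty word; $|w| = 2$) fall out directly from the definitions.
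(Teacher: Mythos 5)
Your proof is correct, and it runs on the same fuel as the paper's --- the Knuth--Morris--Pratt unique factorization together with Lemma~\ref{pal} --- but it is organized around a different pivot. The paper argues each direction directly on the shuffle: for ``$z$ bordered $\Rightarrow$ $z \sha z^R$ not prime'' it writes $z = pup$ and exhibits the explicit factorization $z \sha z^R = (p \sha p^R)(u \sha u^R)(p \sha p^R)$, which needs no unique-factorization machinery at all; for the converse it invokes Corollary~\ref{palin} to see that the prime factorization of the palindrome $w$ has equal first and last factors $x_1 = x_n$, and then unshuffles to read off a border of $z$. You instead route both directions through an intermediate characterization --- a nonempty even-length palindrome is a prime palstar if and only if it has no nonempty even-length palindromic proper prefix --- combined with the observation that borders of $z$ of length $\ell$ correspond exactly to palindromic prefixes of $z \sha z^R$ of length $2\ell$. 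The trade-off: your ``bordered $\Rightarrow$ not prime'' direction now leans on Lemma~\ref{pal} (to peel the palindromic prefix $q$ off $w$ as an initial block of its prime factorization, so that $q^{-1}w$ is again a nonempty palstar), where the paper gets that direction for free from its explicit three-factor decomposition; in exchange, your other direction uses only the first factor of the factorization rather than the full symmetry of Corollary~\ref{palin}, your border-versus-palindromic-prefix correspondence makes the paper's ``unshuffling'' step transparent, and your characterization of prime palstars is a clean standalone fact. One small point to nail down in a final write-up: you should stipulate that an unbordered word is nonempty (under the paper's literal definition the empty word is vacuously unbordered, yet $\varepsilon \sha \varepsilon = \varepsilon$ is not a prime palstar), so that $z \sha z^R$ genuinely lies in ${\tt PAL}$ before you apply the factorization theorem; your remark that the degenerate cases ``fall out directly'' is a bit quick on exactly this point.
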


\begin{proof}

Suppose $w$ is not a prime palstar.  If $w$ is not an even length
palindrome then it is certainly not of the form $z \sha z^r$.
Suppose then that $w$ is an even length palindrome and hence is of
the form $z \sha z^R$.  We will show that $z$ is bordered.  Since $w$ is
not a prime palstar we can factor $w$ into a product of prime
palstars.  Then by Corollary~\ref{palin} such a factorization must
look like $x \cdots x$ for some palindrome $x$.  Then when we
``unshuffle" $w$ into $z$ and $z^R$, we get that $z$ starts with the
odd-indexed letters of $x$ and ends with the odd-indexed letters of
$x^R$.  But $x = x^R$, so $z$ starts and ends with the same word.

On the other hand, suppose $w = x \sha y$.
By comparing the symbols $x$ to $y$ we see that if $y \not= x^R$,
then $w$ is not a palindrome.    So assume $y = x^R$.  Now if $x$
is bordered, then we can write it as $x = zuz$ for some nonempty string
$z$.  Then $w = (zuz) \sha (zuz)^R = (z \sha z^R) (u \sha u^R)
(z \sha z^R)$.  
This gives a factorization of $w$ as a product of two or three nonempty
palstars (according to whether $u$ is empty or nonempty).
\end{proof}

An example of this theorem in English is
${\tt noon}$, which is a prime palstar, and is the shuffle of the unbordered
word
${\tt no}$ with its reversal.

\section{Enumeration of palstars}

As far as we know, up to now
no one has enumerated the palstars.  However, our
argument above allows us to do so, based on enumeration of the
unbordered words.

Nielsen \cite{Nielsen:1973} has shown that if $a_n$ denotes the number of unbordered words
of length $n$ over an alphabet of size $k$, then
$$
a_n = \begin{cases}
	k, & \text{if $n = 1$;} \\
	k a_{n-1} - a_{n/2}, & \text{if $n$ even}; \\
	k a_{n-1}, & \text{if $n$ odd and $> 1$}.
	\end{cases}
$$
(Also see \cite{Blom:1994}.)  Furthermore, he showed that
$a_n \sim c_k k^n$, where $c_k$ is a constant that tends to 
$1$ as $k \rightarrow \infty$, and $c_2 \doteq .2677868$.  

It follows that if $b_n$ is the number of prime palstars of length $2n$,
then $b_n = a_n$.   In particular, about 27\% of all binary 
palindromes are prime palstars.

\section{Context-free languages and inverse star}

We now apply the results in Section~\ref{palstar} to prove that the class
of context-free languages is not closed under inverse star.

Clearly ${\tt PALSTAR} = {\tt PAL}^*$ is context-free.
We have ${\tt PRIMEPALSTAR} = {\tt PALSTAR}^{-*}$.  So it suffices to show that
${\tt PRIMEPALSTAR}$ is not context-free.    Suppose it were.  First, we
need the following result.

\begin{theorem}
The language $U$ of unbordered words over an alphabet of size
at least $2$ is not context-free.
\label{ming}
\end{theorem}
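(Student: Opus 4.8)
The standard tool is the pumping lemma for context-free languages (Bar-Hillel). I'd pick a carefully chosen family of unbordered words, parametrized by a large integer $n$, with two "fragile" regions far apart; pumping in or near one region without touching the other should create a border. A natural candidate over $\{0,1\}$ is something like $w_n = 0^n 1 0^{n+1} 1 0^n$ — wait, I need to check it's actually unbordered. Let me instead think about what makes a word unbordered: no nonempty proper prefix equals a suffix. A clean choice is $w_n = 0^n 1^n 0^n 1^{n-1}$ or, better, a word that begins and ends with different "marker patterns" so any short border is immediately ruled out, and any long border is ruled out by a length/content argument. Actually the cleanest classical example: $a b^n a b^{n+1} a$ type words, or $0 1^n 0 1^{n+1} 0 1^n 0$ — one wants the prefix structure and suffix structure to disagree.

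The key steps, in order: (1) Exhibit an explicit infinite family $\{w_n\}$ of unbordered words with $|w_n| \to \infty$; verify unborderedness by a direct combinatorial check (a short border forces a contradiction on the markers; a long border forces a contradiction on block lengths). (2) Assume $U$ is context-free and let $p$ be the pumping-lemma constant. (3) Choose $n$ large enough that $w_n$ has length $\ge p$, write $w_n = uvxyz$ with $|vxy| \le p$ and $|vy| \ge 1$. (4) Because $|vxy| \le p$, the substring $vxy$ lies entirely within one "block region" of $w_n$ and cannot simultaneously affect the prefix-side marker structure and the suffix-side marker structure. (5) Consider the pumped word $w' = uv^2xy^2z$ (or $uv^0xy^0z = uxz$); show that $w'$ is necessarily bordered — e.g. because the pumping has unbalanced two blocks that were forced to have a prescribed relationship, producing a nonempty prefix that reappears as a suffix (or, in the $i=0$ case, collapsing a marker so a short border appears). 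Hence $w' \notin U$, contradicting the pumping lemma.

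The main obstacle is step (1) combined with step (5): I need a family of unbordered words that is *robust* in the sense that any single "local surgery" of bounded size creates a border, yet *fragile enough* that such surgery is unavoidable given the $|vxy|\le p$ constraint. Words that are "barely unbordered" — where the unborderedness hinges on a single length comparison like $n+1 \ne n$ at two widely separated spots — are ideal, but one must ensure that pumping at any location (left block, middle, right block, or straddling a marker) breaks it. I'd likely need to split into a handful of cases according to where $vxy$ falls, and for each case argue that the pumped word acquires a border; getting a single family that handles *all* positions of $vxy$ uniformly, rather than needing different families for "left surgery" vs. "right surgery," is the delicate part. A good trick is to use a family whose reversal structure is also controlled, or to use the connection already established in Theorem~\ref{palstar-thm}: if $U$ were context-free, then since the perfect shuffle with reversal, $z \mapsto z \sha z^R$, and its inverse behave reasonably with respect to context-freeness on suitable sublanguages, one might derive context-freeness of {\tt PRIMEPALSTAR} or of a known non-context-free language; but the direct pumping argument on an explicit family is the more self-contained route and the one I would carry out.
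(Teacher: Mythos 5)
There is a genuine gap: what you have written is a plan that correctly identifies the general strategy (pump a ``fragile'' family of unbordered words) but stops exactly at the point where the real work begins, and the specific difficulty you flag as ``the delicate part'' is not one that the plain Bar-Hillel pumping lemma can overcome. With the plain pumping lemma the decomposition $uvxyz$ is chosen adversarially, so you must exhibit a family in which \emph{every} admissible placement of $vxy$ breaks unborderedness when pumped. For words built from a few long blocks (which is what your candidates are), this tends to fail: the adversary can place $v$ and $x$ inside a single long block whose exact length is irrelevant to unborderedness, and all pumps remain unbordered. The paper's proof resolves this with two ingredients you do not have. First, it intersects $U$ with the regular language $1\,0^+\,1\,0^+\,1\,0^+\,1\,0^+$, which yields the clean arithmetic characterization $U' = \{1\,0^a\,1\,0^b\,1\,0^c\,1\,0^d : a<d \text{ and } (a\neq c \text{ or } b<d)\}$; this both disposes of your step (1) (membership in $U'$ is a simple inequality check, no combinatorial verification of unborderedness needed) and reduces the problem to a bounded number of blocks. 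Second, and crucially, it uses Ogden's lemma rather than the pumping lemma: by distinguishing the positions of the third block $0^c$ in the word $1\,0^{n+n!}\,1\,0^{n+1+n!}\,1\,0^n\,1\,0^{n+1+n!}$, it forces $vx$ to interact with that block, which eliminates exactly the ``safe surgery'' placements that defeat the plain pumping lemma. (The factorials are there so that a suitable pump count $i = (n!/|x|)+1$ realigns block lengths to violate the defining inequalities.) Without Ogden's lemma or an equivalent device, your outline does not go through as stated.

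A secondary point: your fallback suggestion of deducing non-context-freeness of $U$ from {\tt PRIMEPALSTAR} via Theorem~\ref{palstar-thm} is circular in the logical structure of this paper, since the non-context-freeness of {\tt PRIMEPALSTAR} is itself derived from the present theorem. Also, take care with your candidate words: $0^n 1 0^{n+1} 1 0^n$ begins and ends with $0$ and is therefore bordered, which illustrates why the concrete verification in your step (1) cannot be deferred.
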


\begin{proof}
Assume it is.  Without loss of generality the alphabet is
$\Sigma = \lbrace 0, 1, \ldots \rbrace$.
Consider
	$$U' := U \cap 1\  0^+ \  1 \  0^+ \ 1 \ 0^+ \ 1 \ 0^+ ,$$
the intersection of $U$ with a regular language.
Then
$$    U' := \lbrace  1\  0^a \ 1 \ 0^b \ 1 \ 0^c \  1\ 0^d \ : \ 
(a < d) {\rm\ and\ } ((a \not= c) {\rm\ or\ } (b < d))  \rbrace .$$
Since the context-free languages are closed under intersection with
a regular language, it suffices to prove $U'$ is not context-free.

To do this, we use Ogden's lemma \cite{Ogden:1968}. Choose
	$$z = \overbrace{1 0^{n+n!}}^A \ \overbrace{1 0^{n+1 + n!}}^B \ 
	\overbrace{1 0^n}^C \ \overbrace{1 0^{n+1+n!}}^D  \in U', $$
and distinguish the third block of $0$'s, the one corresponding to $C$.
Write $z = uvwxy$.  Then by Ogden's lemma
$vwx$ must contain at most $n$ distinguished
positions and $vx$ at least one. 

If $vx$ contains a $1$, then by pumping we get a string with too many $1$'s.
Thus $vx$ contains $0$'s only, and each of $v$, $x$ is contained in a
single block of zeros.

\medskip

\noindent Case 1:  $v$ contains $0$'s from block $A$,
and $x$ contains $0$'s from block $C$.
Then consider $u v^2 w x^2 y =
1\  0^{a'}\  1 \  0^{b'} \ 1\  0^{c'}\ 1 \ 0^{d'}$.
It has $a' \geq d'$, a contradiction.

\medskip

\noindent Case 2:  $v$ contains $0$'s from block $B$,
and $x$ contains $0$'s from block $C$.
Then consider $u v^i w x^i y =
1\  0^{a'}\  1 \  0^{b'} \ 1\  0^{c'}\ 1 \ 0^{d'}$,
where $i = (n!/|x|)+1$.  Then
this string has $a' = c'$, $b' \geq d'$, a contradiction.

\medskip

\noindent Case 3:  $vx$ contains $0$'s from block $C$.  Then as in the previous
case, choose $i = (n!/|vx|) + 1$.  The resulting string has $a' = c'$
and $b' \geq d'$, a contradiction.

\medskip

\noindent Case 4:  $v$ contains $0$'s from block $C$, and $x$ contains $0$'s from block
$D$.
Consider $u v^i w x^i y =
1\  0^{a'}\  1 \  0^{b'} \ 1\  0^{c'}\ 1 \ 0^{d'}$ with $i = 0$
to get $a' \geq d'$, a contradiction.
\end{proof}

Now, using this result, we can prove our last result:

\begin{theorem}
Over an alphabet of two or more letters,
{\tt PRIMEPALSTAR} is not context-free.
\end{theorem}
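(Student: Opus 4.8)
The plan is to reduce the claim to Theorem~\ref{ming} by exhibiting {\tt PRIMEPALSTAR} and the language $U$ of unbordered words as essentially the same language up to an operation that preserves context-freeness. Fix an alphabet $\Sigma$ with $|\Sigma| \ge 2$. The key input is Theorem~\ref{palstar-thm}, which says exactly that
$$
{\tt PRIMEPALSTAR} = \{\, z \sha z^R \ : \ z \in U \,\}.
$$
Every prime palstar is an even-length palindrome, so a prime palstar has the form $w = z \sha z^R = a_1 b_1 a_2 b_2 \cdots a_n b_n$ with $z = a_1 a_2 \cdots a_n$; ``unshuffling'' $w$ --- keeping the odd-indexed letters and deleting the even-indexed ones --- returns $z$. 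Hence, if {\tt PRIMEPALSTAR} were context-free, applying this unshuffle operation would show $U$ is context-free, contradicting Theorem~\ref{ming}.

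The first thing I would check is that alternate-letter deletion preserves context-freeness. It is a rational transduction: a two-state generalized sequential machine that alternates between an ``emit'' state, on which it reads a letter, outputs it, and moves to a ``skip'' state, and a ``skip'' state, on which it reads a letter, outputs nothing, and returns to the emit state (accepting only in the emit state, so that only even-length inputs have an image). If one prefers to avoid quoting closure under gsm mappings, the same effect is a composition of standard operations: with $\Delta = \Sigma \times \{1,2\}$ and $h \colon \Delta^* \to \Sigma^*$ the tag-forgetting homomorphism $h(a,i) = a$, form $h^{-1}({\tt PRIMEPALSTAR})$, intersect with the regular language $\bigl((\Sigma \times \{1\})(\Sigma \times \{2\})\bigr)^*$ to force the tags to alternate $1,2,1,2,\ldots$, and then apply the homomorphism $g$ with $g(a,1) = a$ and $g(a,2) = \varepsilon$. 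Each of these operations preserves context-freeness, so the composite image of {\tt PRIMEPALSTAR} is context-free.

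Finally I would identify that image with $U$. By the displayed equation, the unshuffle of $w = z \sha z^R$ is $z$, and as $w$ ranges over all prime palstars, $z$ ranges over all of $U$; hence the image is precisely $U$. This last identification is the crux of the argument, though it is immediate from Theorem~\ref{palstar-thm} --- in particular from its ``only if'' direction, which certifies that \emph{every} prime palstar is genuinely of the form $z \sha z^R$ with $z$ unbordered, so that the image is neither too large (nothing outside $U$ appears) nor too small (all of $U$ appears). Putting the pieces together: {\tt PRIMEPALSTAR} being context-free would force $U$ to be context-free, contradicting Theorem~\ref{ming}; hence {\tt PRIMEPALSTAR} is not context-free. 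Since {\tt PALSTAR} is context-free and {\tt PRIMEPALSTAR} $= {\tt PALSTAR}^{-*}$, this also delivers the promised witness that the context-free languages are not closed under the inverse-star operation.
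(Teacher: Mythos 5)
Your proof is correct and follows essentially the same route as the paper: both reduce to Theorem~\ref{ming} by applying the ``extract odd-indexed letters'' operation to {\tt PRIMEPALSTAR} and invoking Theorem~\ref{palstar-thm} to identify the image with $U$. The only difference is cosmetic --- the paper realizes the unshuffle as $h \circ g^{-1}$ for two explicit morphisms, while you realize it as a gsm mapping (equivalently, inverse homomorphism, intersection with a regular set, and homomorphism), which are interchangeable ways of expressing the same rational transduction.
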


\begin{proof}
Consider the morphisms $g$ and $h$ defined
as follows:  $g(a) = 00$, $g(b) = 01$, $g(c) = 10$, $g(d) = 11$,
and $h(a) = h(b) = 0$, $h(c) = h(d) = 1$.  Then
the effect of $h \circ g^{-1}$ is to extract the odd-indexed letters
from an even-length word.

Assume that ${\tt PRIMEPALSTAR}$ is context-free.  Then
$h(g^{-1}({\tt PRIMEPALSTAR}))$ would be context-free.  But
by Theorem~\ref{palstar-thm}
$h(g^{-1}({\tt PRIMEPALSTAR})) = U$, the language of unbordered words,
which we have shown in Theorem~\ref{ming} to be non-context-free.
\end{proof}

\end{document}